\documentclass[10pt,twocolumn]{article}
\usepackage[T1]{fontenc}
\usepackage[utf8]{inputenc}
\usepackage{lmodern}
\usepackage[a4paper,margin=18mm,columnsep=8mm]{geometry}
\usepackage{amsmath,amssymb,amsthm,mathtools}
\usepackage{booktabs}
\usepackage{array}
\usepackage{microtype}
\usepackage{enumitem}
\usepackage{natbib}
\usepackage[hidelinks]{hyperref}
\usepackage{url}
\usepackage{xcolor}
\usepackage{orcidlink}

\newtheorem{theorem}{Theorem}[section]
\newtheorem{proposition}[theorem]{Proposition}
\newtheorem{lemma}[theorem]{Lemma}
\theoremstyle{definition}
\newtheorem{definition}[theorem]{Definition}

\title{Single-Machine Scheduling with Interval Predictions and Costly Preemption}
\author{Lachlan Bridges\orcidlink{0000-0003-0936-4372}\\
\small Independent Researcher, Adelaide, South Australia, Australia\\
\small Corresponding author: \href{mailto:lachlanjbridges@gmail.com}{lachlanjbridges@gmail.com}}
\date{}

\begin{document}
\maketitle

\begin{abstract}
We study the single-machine total-completion-time problem $1||\sum C_j$ when processing times are unknown but each job comes with a reported interval. Jobs are initially processed in nondecreasing order of reported upper bound. If a job is still unfinished after receiving that much service, the reported upper bound has been violated and the policy switches to a resumable geometric fallback. Each interruption of an unfinished job incurs an additive penalty $\kappa$.

For a residual set of $m$ jobs, known size ratio $D=s_1/s_0$, normalized interruption penalty $\lambda=\kappa/s_0$, and geometric depth $K$, we derive an explicit worst-case coefficient $\Psi_{m,D,\lambda}(K)$. When $\lambda>0$, a finite minimizing depth exists and can be chosen after the trigger from the observed number of unfinished jobs; the optimal depth decreases with the interruption penalty and increases with the residual set size. We also derive the exact worst-case coefficient $R_{n,D}$ for an arbitrary nonpreemptive list when all processing times lie in a known bounded range. These bounds give guarantees for valid intervals, a single interval failure, arbitrary reports, no-trigger outcomes, and random instances. In the single-failure regime, they also give a precise condition under which the proved fallback guarantee is smaller than the bounded-range continuation guarantee.
\end{abstract}

\noindent\textbf{Keywords:} single-machine scheduling; learning-augmented scheduling; processing-time predictions; costly preemption; total completion time; interval uncertainty

\section{Introduction and contributions}
The underlying clairvoyant problem is the classical single-machine objective $1||\sum C_j$. We assume instead that each job arrives with a reported interval $[\ell_j,u_j]$ and that interrupting an unfinished job carries a cost. The scheduler first processes jobs in nondecreasing order of $u_j$. If a job is still unfinished after receiving $u_j$ units of service, the scheduler has learned something concrete: that reported upper bound was wrong. At that point it may continue the original list or use the new information to change how the unfinished jobs are scheduled.

This information pattern is useful when completed service should be preserved, so restarting a job would waste work, but frequent switching is itself disruptive. The resulting decision is not simply whether to preempt; it is how aggressively to interleave the remaining jobs after the original report has failed. Our fallback uses geometric cumulative service thresholds and keeps all completed service.

We measure interruption cost in the same units as total completion time. Thus $\kappa\ge0$ is the additive penalty for switching away from an unfinished job, and $\lambda=\kappa/s_0$ is the corresponding dimensionless parameter. Lower endpoints and interval widths are used in the performance guarantees, while upper bounds determine the initial order and the observable trigger.

\paragraph{Contributions.}
\begin{enumerate}[leftmargin=*]
\item We study a resumable single-machine model in which violation of a reported upper bound triggers a change of scheduling rule, and interruptions of unfinished jobs carry an explicit additive cost.
\item For every finite geometric depth $K$, we prove a joint bound on residual total completion time and interruption cost.
\item When the interruption penalty is positive, the number of unfinished jobs observed at the trigger determines a finite bound-minimizing depth, with
\[
\lambda\uparrow\Longrightarrow K^*\downarrow,\qquad m\uparrow\Longrightarrow K^*\uparrow.
\]
\item We derive the exact worst-case ratio $R_{n,D}$ of an arbitrary nonpreemptive list to SPT when all processing times lie in $[s_0,s_1]$, and identify sharp step-vector examples.
\item Combining the pre-trigger list bound with the fallback bound gives full-policy guarantees and a precise condition under which, in the single-failure regime, the fallback guarantee is smaller than the bounded-range list bound.
\end{enumerate}
The mathematical contribution is the interaction of these ingredients. The upper-bound violation is observed during ordinary execution rather than through a separate information-acquisition action; completed service is retained; interruptions carry an explicit additive cost; and, after the trigger, the fallback depth is chosen from the observed number of unfinished jobs. The exact bounded-range list coefficient provides a sharp comparator for the resulting policy guarantees.

\section{Model, interruption cost, and benchmark identities}
\begin{definition}[Instance and reports]
There are $n\ge1$ jobs, all released at time zero, on one unit-speed resumable machine. Job $j$ has processing requirement $p_j>0$ and report $I_j=[\ell_j,u_j]$ with $0<\ell_j\le u_j<\infty$. A deterministic label rule breaks ties. The joint-coverage event is
\[
V_0=\{\ell_j\le p_j\le u_j\ \forall j\}.
\]
Whenever an interval-width guarantee is invoked, $\rho\ge1$ satisfies $u_j/\ell_j\le\rho$. For the fallback results we assume
\[
0<s_0\le p_j\le s_1<\infty,\qquad D=s_1/s_0.
\]
The bounds $s_0,s_1$ (and hence $D$) are known to the scheduler whenever those results are used. Here $s_0$ is a global lower bound on the true processing times; it is distinct from the reported lower endpoints $\ell_j$, which belong to the possibly erroneous job reports.
\end{definition}

\begin{definition}[Schedules and interruption cost]
A resumable schedule is a selector $\sigma(t)\in\{0,1,\ldots,n\}$, with $0$ denoting idle. It is admissible if every job completes and the selector has a locally finite sequence of positive-length maximal constant service intervals through final completion. An interruption is charged exactly when the scheduler switches away from a job that is still unfinished; starts from idle and departures at completion are free, and adjacent same-job phases merge. Let
\[
\begin{aligned}
F_A(p)&=\sum_jC_j^A(p),\\
P_A(p)&=\#\{\text{charged interruptions}\}.
\end{aligned}
\]
\[
J_A(p)=F_A(p)+\kappa P_A(p),\qquad \lambda=\kappa/s_0.
\]
The parameter $\kappa$ is known to the scheduler. Appendix~\ref{app:fractional} records a fractional comparison that uses total completion time only; the physical policy studied in the main text is always resumable and single-server.
\end{definition}

\begin{lemma}[Clairvoyant benchmark]
If $x_{(1)}\le\cdots\le x_{(m)}$ are positive sizes, SPT minimizes total completion time even among preemptive schedules, and
\[
\operatorname{OPT}(x)=\sum_{i=1}^m(m-i+1)x_{(i)}=\sum_i x_i+\sum_{i<j}\min\{x_i,x_j\}.
\]
Because SPT has no charged interruption, it also minimizes $J$ among admissible resumable schedules for every $\kappa\ge0$.
\end{lemma}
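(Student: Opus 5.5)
We prove the three assertions in turn: SPT is optimal among nonpreemptive orders, preemption never helps, and the closed forms agree.

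\emph{Nonpreemptive orders.} We use the standard adjacent-interchange argument. Suppose a nonpreemptive order places job $a$ immediately before job $b$ with $x_a>x_b$. Swapping them leaves every other completion time unchanged. The pair's contribution changes from $2t+2x_a+x_b$ to $2t+2x_b+x_a$, a decrease of $x_a-x_b>0$. Hence every order with an inversion is strictly improvable, and SPT is optimal among permutations. Summing $C_{(i)}=\sum_{k\le i}x_{(k)}$ over $i$ counts $x_{(k)}$ exactly $m-k+1$ times, which gives the first formula.

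\emph{Preemption does not help.} Take any admissible resumable schedule. Idle time can only delay completions, so we first delete it by shifting service earlier, which weakly decreases every $C_j$. Now let the jobs complete in the order $\pi(1),\dots,\pi(m)$. By time $C_{\pi(i)}$ the machine has fully served jobs $\pi(1),\dots,\pi(i)$, so
\[
C_{\pi(i)}\ge\sum_{k\le i}x_{\pi(k)}.
\]
The right-hand side is exactly the completion time of the nonpreemptive list $\pi$. Therefore $F(\text{schedule})\ge F(\text{list }\pi)\ge\operatorname{OPT}(x)$. Ties in completion time are handled by ordering through the label rule. Local finiteness of the service intervals guarantees that completion times are well defined, so the measure-theoretic side is unproblematic.

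\emph{Second identity.} We compute
\[
\sum_i x_i+\sum_{i<j}\min\{x_i,x_j\}.
\]
Each unordered pair contributes the smaller size, so $x_{(k)}$ appears once for each partner of larger rank, that is, $m-k$ times. Adding the diagonal term gives $m-k+1$, matching the first formula; ties are split arbitrarily.

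\emph{The cost $J$.} SPT switches only at completions, so $P_{\mathrm{SPT}}=0$. For any admissible schedule $A$ we then have
\[
J_A=F_A+\kappa P_A\ge F_A\ge\operatorname{OPT}(x)=J_{\mathrm{SPT}},
\]
using $\kappa\ge0$.

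The only step needing care is the preemptive lower bound, and the key observation there is that the completion-order argument needs only that the processed work up to $C_{\pi(i)}$ includes all of jobs $\pi(1),\dots,\pi(i)$. That holds for any single-server schedule.
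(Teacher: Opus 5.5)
Your proof is correct and follows the paper's route: you reorder any resumable schedule by completion time into a nonpreemptive list (your prefix-work inequality $C_{\pi(i)}\ge\sum_{k\le i}x_{\pi(k)}$ is exactly the paper's ``each corresponding prefix completes no later''), then apply adjacent interchange and expand the SPT sum pairwise, and $P_{\mathrm{SPT}}=0$ gives the $J$ claim. The idle-time deletion step is harmless but unnecessary, since the prefix-work inequality already holds for any unit-speed single-server schedule.
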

\begin{proof}
Order jobs of any feasible schedule by completion time and process them nonpreemptively in that order. Each corresponding prefix completes no later; adjacent interchange then gives SPT. Expanding the SPT sum pairwise gives the identity.
\end{proof}

\begin{proposition}[Sharp bounded-range list bound]\label{prop:range}
Put $S_0=0$ and $S_t=t(t+1)/2$ for $t\ge1$, and define
\[
R_{n,D}:=\max_{0\le k\le n}\frac{S_{n-k}+D(S_n-S_{n-k})}{(S_n-S_k)+DS_k}.
\]
Every nonpreemptive list $L$ satisfies
\[
F_L(p)\le R_{n,D}\operatorname{OPT}(p),
\]
and the factor is sharp. Equality is attained for any maximizing $k$ by $n-k$ jobs of size $s_0$ and $k$ jobs of size $s_1$, scheduled in LPT order. In particular $R_{n,1}=1$. On joint coverage, the upper-endpoint list also satisfies the $\rho$ bound, so define
\[
A_{n,D,\rho}:=\min\{\rho,R_{n,D}\},\qquad F_U(p)\le A_{n,D,\rho}\operatorname{OPT}(p).
\]
\end{proposition}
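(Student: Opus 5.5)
The plan is to reduce the statement to a linear-fractional optimization over sorted size vectors and then evaluate it at the vertices of that region.

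\textbf{Step 1 (reduce to LPT).} Fix the multiset of sizes and let $y_1\le\cdots\le y_n$ be the sorted sizes. A nonpreemptive list gives
\[
F_L(p)=\sum_j p_j+\sum_{\text{pairs}}p_{\text{earlier}}.
\]
By the pairwise identity in the Clairvoyant benchmark lemma, each pair contributes at most $\max\{p_i,p_j\}$, with equality for every pair exactly when the list is LPT. Hence
\[
F_L(p)\le F_{\mathrm{LPT}}(p)=\sum_{i=1}^n i\,y_i,\qquad \operatorname{OPT}(p)=\sum_{i=1}^n (n-i+1)\,y_i .
\]
It therefore suffices to bound the ratio
\[
\Phi(y)=\frac{\sum_i i\,y_i}{\sum_i (n-i+1)\,y_i}
\]
over the polytope $P=\{s_0\le y_1\le\cdots\le y_n\le s_1\}$.

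\textbf{Step 2 (vertex argument).} The denominator of $\Phi$ is strictly positive on $P$. So $\Phi$ is linear-fractional, hence quasi-convex and quasi-concave, and its maximum over the compact polytope $P$ is attained at a vertex. I will justify this in one of two ways.
\begin{itemize}
\item The superlevel sets $\{\Phi\ge r\}=\{N(y)-rD(y)\ge0\}$ are half-spaces intersected with $P$, so $\max\Phi=r$ iff $\max_P\bigl(N-rD\bigr)=0$. The latter is a linear program and is attained at a vertex.
\item Alternatively, substitute $y_i=s_0+\sum_{t\le i}\delta_t$ with $\delta_t\ge0$ and $\sum_t\delta_t\le s_1-s_0$. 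This is a simplex in $\delta$, whose vertices put all increment at a single index.
\end{itemize}
Either way, the vertices are the step vectors with $n-k$ entries equal to $s_0$ followed by $k$ entries equal to $s_1$, for $0\le k\le n$.

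\textbf{Step 3 (evaluate at the vertices).} For the step vector with parameter $k$, the numerator is
\[
s_0S_{n-k}+s_1(S_n-S_{n-k}),
\]
and the denominator is
\[
s_0(S_n-S_k)+s_1S_k .
\]
Dividing both by $s_0$ gives exactly the $k$-th term of $R_{n,D}$. The same step vector, scheduled in LPT order, attains equality, which proves sharpness. When $D=1$, every term equals $S_n/S_n$, so $R_{n,1}=1$.

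\textbf{Step 4 (the $\rho$ bound).} On $V_0$, the upper-endpoint list places $i$ before $j$ only when $u_i\le u_j$. Then
\[
p_i\le u_i\le u_j\le\rho\,\ell_j\le\rho\,p_j,
\]
and trivially $p_i\le\rho\,p_i$, so $p_{\text{earlier}}\le\rho\min\{p_i,p_j\}$. Inserting this into the pairwise identity, and using $\sum_j p_j\le\rho\sum_j p_j$, gives $F_U\le\rho\operatorname{OPT}$. Combined with the general list bound this yields $F_U\le A_{n,D,\rho}\operatorname{OPT}$.

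The main obstacle is making Step 2 rigorous. I expect the cleanest route is the $\delta$-simplex reparametrization together with the standard fact that a ratio of positive affine functions on a simplex is maximized at a vertex: along any segment, $\Phi$ is monotone, since it is a Möbius function of the segment parameter with no pole. Ties in the sorting do not matter, because $\Phi$ depends only on the sorted vector.
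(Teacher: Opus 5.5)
Your proposal is correct and follows essentially the same route as the paper: you reduce to LPT versus SPT on the sorted vector, note that the linear-fractional ratio is maximized at a vertex of the ordered range polytope, identify the vertices as monotone step vectors, evaluate, and prove the $\rho$ bound by the same pairwise decomposition. The differences are only in how the standard facts are justified. You use the pairwise identity rather than rearrangement for the LPT reduction, and an LP (Dinkelbach-type) or increment-simplex argument rather than the paper's weighted-average-of-vertex-ratios argument; the increment reparametrization also identifies the vertices a bit more cleanly. One notational point: avoid writing the denominator as $D(y)$, since $D=s_1/s_0$ is already taken.
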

\begin{proof}
The case $D=1$ is immediate because all processing times are equal, so assume $D>1$. Normalize $s_0=1$ and sort actual sizes as $1\le x_1\le\cdots\le x_n\le D$. For a fixed multiset, the rearrangement inequality makes LPT the worst list, with numerator $\sum_i i x_i$, while SPT has denominator $\sum_i(n-i+1)x_i$.

The ratio has positive denominator. If $x=\sum_r\gamma_r v^{(r)}$ is a convex combination of vertices of the ordered range polytope, then its ratio is a convex combination of the vertex ratios with weights proportional to $\gamma_r\sum_i(n-i+1)v_i^{(r)}$; hence a maximum is attained at a vertex. After the affine change $y_i=(x_i-1)/(D-1)$, the vertices are exactly the monotone $0$--$1$ step vectors: any interior fractional block can be perturbed up and down while preserving monotonicity. Thus $x_1=\cdots=x_{n-k}=1$ and $x_{n-k+1}=\cdots=x_n=D$ for some $k$. Evaluating LPT and SPT gives exactly the displayed ratio.

For the covered upper-endpoint bound, if job $i$ precedes job $j$ then $u_i\le u_j$, and coverage plus $u_j/\ell_j\le\rho$ gives
\[
p_i\le u_i\le u_j\le\rho\ell_j\le\rho p_j.
\]
Hence the pairwise contribution $p_i$ of this ordered pair is at most $\rho\min\{p_i,p_j\}$; the singleton terms satisfy $p_i\le\rho p_i$. Summing the pairwise decomposition from Lemma~2.3 yields $F_U(p)\le\rho\operatorname{OPT}(p)$. Taking the minimum with the independent bounded-range bound proves the corollary.
\end{proof}

\begin{definition}[Coverage events]
Let $V_1$ be the event that exactly the job producing the first upper-bound trigger violates its interval while all other intervals cover, and let $V_c=(V_0\cup V_1)^c$. On a triggering outcome let $M$ be the number of unfinished jobs at trigger time; on a no-trigger outcome set $M=0$.
\end{definition}

\section{A fractional comparison}
For comparison with the physical policy, Appendix~\ref{app:fractional} records a fractional policy that combines the upper-endpoint list with equal sharing. It satisfies a coverage-dependent bound and a report-independent bound based on the sharp equal-sharing factor $2n/(n+1)$. This comparison is not used in the fallback analysis below.

\section{Upper-bound violation and geometric fallback}
\begin{definition}[Upper-bound trigger]
Order jobs by nondecreasing $u_j$ and run that list nonpreemptively. When a running job has received exactly $u_j$ units of service, check completion first. If it is still unfinished, trigger immediately. If the first violation occurs at position $q$ with job $e=\pi(q)$, then
\[
\tau=\sum_{a<q}p_{\pi(a)}+u_e.
\]
If no upper bound is violated, set $\tau=\infty$. At a trigger the current job continues, so entering the fallback does not itself create a charged interruption.
\end{definition}

\begin{definition}[Geometric fallback]
Suppose $D>1$ and choose integer $K\ge1$. Put
\[
\begin{aligned}
c&=D^{1/K},\\
T_h&=s_0c^{h+1}\quad(h=0,\ldots,K-1),\qquad T_{-1}=0.
\end{aligned}
\]
At trigger time, place the triggering job first and then keep the inherited upper-endpoint order for the other unfinished jobs. In round $h$, serve each unfinished job until completion or until its cumulative fallback service reaches $T_h$. A job that completes at a threshold is removed before any later activation. For $D=1$ only $K=1$ is used.
\end{definition}

\paragraph{Worked example.}
Let $s_0=1$, $s_1=4$, so $D=4$, and take $K=2$. Then $c=2$ and the cumulative thresholds are $T_0=2$ and $T_1=4$. Suppose the upper-endpoint order is jobs $1,2,3$, job 1 has reported upper endpoint $u_1=1$, and the actual processing times are $p_1=3.5$, $p_2=1.5$, $p_3=1$. Job 1 is still unfinished after one unit of service, so the trigger occurs at time $\tau=1$ with residual vector $(2.5,1.5,1)$, and job 1 remains first. In round 0, job 1 receives two units of fallback service and remains unfinished, so switching from job 1 to job 2 incurs one interruption cost. Jobs 2 and 3 then complete within the same round, and departures at completion are free. In round 1, job 1 is the only survivor and its remaining $0.5$ units complete it. Thus the example has one trigger, two geometric rounds, and exactly one charged interruption.

\paragraph{Causality and finiteness.}
The trigger depends only on the service and completion history observed up to that time. For every fixed finite $K$, the fallback has finitely many service phases and completes all residual jobs because its final cumulative threshold is $s_1$. Appendix~\ref{app:meas} gives the measurability details needed for the random-instance result in Section~8.

\section{Fallback bound with interruption costs}
\begin{theorem}[Fixed-depth fallback bound]\label{thm:residual}
Let $r$ be the residual vector at fallback entry and $m=|r|$. For $D>1$ define
\[
\alpha_m(c)=1+c-\frac{2c}{m+1},
\]
\[
\beta_1(D)=0,\qquad \beta_K(D)=\max_{1\le q<K}qD^{-q/K}\quad(K\ge2).
\]
For every finite $K$ and every $\lambda\ge0$,
\[
F^{\rm res}_{G_K}(r)\le\alpha_m(D^{1/K})\operatorname{OPT}(r),
\]
\[
P^{\rm res}_{G_K}(r)\le\frac{\beta_K(D)}{s_0}\sum_i r_i,
\]
and
\[
F^{\rm res}_{G_K}(r)+\kappa P^{\rm res}_{G_K}(r)\le\Psi_{m,D,\lambda}(K)\operatorname{OPT}(r),
\]
where
\[
\Psi_{m,D,\lambda}(K)=\alpha_m(D^{1/K})+\lambda\beta_K(D).
\]
For $D=1,K=1$, residual service is SPT and the exact coefficient is one for every price.
\end{theorem}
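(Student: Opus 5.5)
The plan is to bound completion time and interruptions separately, each by pairwise or per-job charging against the decomposition of Lemma~2.3, and then add the two bounds. Throughout, normalize time by $s_0$ where convenient and write $c=D^{1/K}$.

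First I record two facts about residuals. Every residual satisfies $r_i\le s_1=s_0c^K=T_{K-1}$, so every job finishes by round $K-1$. Every residual except the triggering job's satisfies $r_i\ge s_0$, since those jobs have received no service yet. The triggering job's residual $p_e-u_e$ may be smaller than $s_0$, but it is placed first in the fallback order.

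\textbf{Completion time.} I write $F^{\rm res}=\sum_i r_i+\sum_{\{i,j\}}d_{ij}$, where $d_{ij}$ is the service one job of the pair receives before the other completes, summed over both directions. Fix a pair with $r_i\le r_j$, and let $h$ be the round in which $i$ completes, so $T_{h-1}<r_i\le T_h$.
\begin{itemize}
\item The delay that $i$ causes $j$ is at most $r_i$, since $i$'s whole service precedes $j$'s completion.
\item The delay that $j$ causes $i$ is at most $\min\{r_j,T_h\}$, since $j$'s cumulative service before $i$ finishes in round $h$ is at most $T_h$.
\end{itemize}
If $h\ge1$, then $T_h=cT_{h-1}<cr_i$. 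If $h=0$ and $r_i\ge s_0$, then $T_0=cs_0\le cr_i$. If $h=0$ and $i$ is the triggering job, then $i$ runs first and $j$ causes no delay. Hence each pair contributes at most $(1+c)\min\{r_i,r_j\}$, and therefore
\[
F^{\rm res}\le \sum_i r_i+(1+c)\sum_{i<j}\min\{r_i,r_j\}=(1+c)\operatorname{OPT}(r)-c\sum_i r_i .
\]
The identity on the right is Lemma~2.3. Next, $\operatorname{OPT}(r)=\sum_i (m-i+1)r_{(i)}$ places the larger weights on the smaller sizes. By Chebyshev's sum inequality this gives $\operatorname{OPT}(r)\le\frac{m+1}{2}\sum_i r_i$, that is, $\sum_i r_i\ge \frac{2}{m+1}\operatorname{OPT}(r)$. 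Substituting yields the bound $\alpha_m(c)\operatorname{OPT}(r)$.

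\textbf{Interruptions.} A job $i$ can be charged at most once per round in which it reaches its threshold unfinished, and this is possible only in rounds $h\le K-2$, because in round $K-1$ every job completes. Merging of same-job phases only removes charges, so I just count an upper bound. Let $q_i=\#\{h\le K-2: r_i>T_h\}$; this bounds job $i$'s charges. If $q_i=q\ge1$, then $r_i>T_{q-1}=s_0D^{q/K}$. This gives
\[
q\le qD^{-q/K}\frac{r_i}{s_0}\le\beta_K(D)\frac{r_i}{s_0},
\]
where the last step uses $1\le q\le K-1$. If $q_i=0$ the bound is trivial. For $K=1$ there are no charges at all, matching $\beta_1=0$. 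Summing over $i$ gives the stated bound on $P^{\rm res}$.

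\textbf{Combining and the case $D=1$.} The trivial bound $\sum_i r_i\le\operatorname{OPT}(r)$ gives $\kappa P^{\rm res}\le\lambda\beta_K(D)\operatorname{OPT}(r)$. Adding this to the completion-time bound gives $\Psi_{m,D,\lambda}(K)$. For $D=1$ and $K=1$, each remaining job runs to completion in one pass in fallback order. The triggering residual is $p_e-u_e\le s_1-s_0=0$ when measured against $s_0$, and every other job has size $s_0$. So the order is SPT and there are no charged interruptions, giving coefficient one.

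The main obstacle is the $h=0$ pair charging. It is the only place where the bound $T_0\le cr_i$ can fail, namely for a triggering residual smaller than $s_0$. I handle it through the rule that the triggering job is served first; the delicate point is checking that this rule really makes the delay from $j$ zero and not merely small.
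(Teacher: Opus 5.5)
Your proof is correct and is essentially the paper's argument. It uses the same pairwise decomposition of $F^{\rm res}$, with each pair charged at most $(1+c)\min\{r_i,r_j\}$. You orient pairs by size and use the smaller job's completion round, whereas the paper orients them by fallback position; this difference is immaterial. Your Chebyshev step is the paper's $Q\le\frac{m-1}{2}\sum_i r_i$, and your interruption count against $\beta_K(D)$ is the paper's. The one slip is in the $D=1$ branch: $p_e-u_e\le s_1-s_0=0$ is false, since it would mean the triggering job had already finished. The correct reason is that $0<r_e=p_e-u_e<p_e=s_0$ while every other residual equals $s_0$, so serving the triggering job first is already SPT order.
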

\begin{proof}
Assume $D>1$ and write $c=D^{1/K}$. Put the triggering job first in residual order and let $h_j$ be job $j$'s completion round. Set
\[
W=\sum_i r_i,\qquad Q=\sum_{i<j}\min\{r_i,r_j\}.
\]
For $i<j$, let $A_{ij}$ be service to $i$ while $j$ is unfinished and $B_{ij}$ service to $j$ while $i$ is unfinished. Then
\[
F^{\rm res}_{G_K}(r)=W+\sum_{i<j}(A_{ij}+B_{ij}).
\]
Here $W$ accounts for the service contributing to each job's own completion time, while $A_{ij}+B_{ij}$ is exactly the additional service accumulated while the other member of the pair is still unfinished. Because $i$ precedes $j$, $A_{ij}\le\min\{r_i,T_{h_j}\}$. If $h_j\ge1$, completion-round minimality gives $r_j>T_{h_j-1}$, so $T_{h_j}<cr_j$. If $h_j=0$ and $i<j$, then $j$ is not the triggering job, hence $r_j\ge s_0$ and $T_0=s_0c\le cr_j$. Thus $A_{ij}\le c\min\{r_i,r_j\}$. Conversely, $j$ receives no service in $i$'s completion round before $i$ completes, so
\[
B_{ij}\le\min\{r_j,T_{h_i-1}\}\le\min\{r_i,r_j\}.
\]
Therefore $F^{\rm res}\le W+(c+1)Q$. The clairvoyant identity gives $\operatorname{OPT}(r)=W+Q$. Also,
\[
Q\le\frac12\sum_{i<j}(r_i+r_j)=\frac{m-1}{2}W,
\]
because each residual size appears in exactly $m-1$ unordered pairs. Substitution yields the $\alpha_m(c)$ bound. The only residual that may be below $s_0$ is the triggering job's residual, and placing that job first removes any need for a lower bound on it.

For interruptions let $L_i$ be job $i$'s number of nominal fallback activations. The final activation ends in completion and is free; each earlier activation creates at most one charged interruption. Hence $P^{\rm res}\le\sum_i(L_i-1)$. If $L_i=q+1$, strict survival through threshold $s_0D^{q/K}$ gives
\[
q<qD^{-q/K}\frac{r_i}{s_0}\le\beta_K(D)\frac{r_i}{s_0}.
\]
Summing proves the interruption bound, and $\sum_i r_i\le\operatorname{OPT}(r)$ gives the joint bound. The $D=1$ branch is immediate.
\end{proof}

\section{Choosing the fallback depth}
\begin{theorem}[Optimal depth and comparative statics]\label{thm:adaptive}
For $D>1$, $m\ge2$, and $\lambda>0$, define
\[
\begin{aligned}
K_m^*(D,\lambda)&\in\arg\min_{K\ge1}\Psi_{m,D,\lambda}(K),\\
\Psi^*_{m,D,\lambda}&=\Psi_{m,D,\lambda}(K_m^*).
\end{aligned}
\]
A finite minimizer exists. If $0<\lambda_1<\lambda_2$, every minimizer at $\lambda_2$ is no larger than every minimizer at $\lambda_1$; if $m_1<m_2$, every minimizer for $m_2$ is no smaller than every minimizer for $m_1$. Thus
\[
\lambda\uparrow\Longrightarrow K^*\downarrow,\qquad m\uparrow\Longrightarrow K^*\uparrow.
\]
Moreover $\Psi^*_{m,D,\lambda}$ is strictly increasing in $m$ and strictly improves the optimized coarse envelope $\min_K[1+D^{1/K}+\lambda\beta_K(D)]$. For $m=1$ and positive price, $K_1^*=1$; for $D=1$, $K=1$ is exact for every price. At $D>1,\lambda=0,m\ge2$,
\[
\inf_{K\ge1}\Psi_{m,D,0}(K)=\frac{2m}{m+1}
\]
is not attained at finite $K$.
\end{theorem}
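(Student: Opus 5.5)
The plan is to treat $\Psi_{m,D,\lambda}(K)=\alpha_m(D^{1/K})+\lambda\beta_K(D)$ as a sum of two sequences in $K$. The first piece, $f_m(K):=\alpha_m(D^{1/K})=1+\bigl(1-\tfrac{2}{m+1}\bigr)D^{1/K}$, is strictly decreasing in $K$ when $m\ge2$, and it tends to $2m/(m+1)$. The second piece is $g(K):=\beta_K(D)$.

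\textbf{Existence.} I will show that $g(K)\to\infty$. Take $q=\lfloor K/2\rfloor$; then $g(K)\ge qD^{-1/2}$, which grows linearly in $K$. Since $f_m\ge 2m/(m+1)$ and $\lambda>0$, we get $\Psi\to\infty$. So the minimum is attained on a finite initial segment, and a finite minimizer exists.

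\textbf{Case $m=1$.} Here $\alpha_1\equiv1$ and $g(1)=0$, so $K_1^*=1$.

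\textbf{Case $D=1$.} This is just Theorem~\ref{thm:residual}.

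\textbf{Case $\lambda=0$.} Then $\Psi=f_m(K)$, which is strictly decreasing with limit $2m/(m+1)$. The infimum therefore equals $2m/(m+1)$ and is not attained at any finite $K$.

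\textbf{Comparative statics.} I will use the standard single-crossing/Topkis argument on the discrete set $K\ge1$.

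For the price parameter, $\Psi$ is linear in $\lambda$ with slope $g(K)$. First I need $g$ nondecreasing in $K$. Write $qD^{-q/K}$; for fixed $q$ this increases in $K$, and enlarging $K$ also enlarges the range of admissible $q$, so monotonicity follows. Now suppose $K_2>K_1$ is a minimizer at $\lambda_2$ and $K_1$ is a minimizer at $\lambda_1$. Adding the two optimality inequalities gives
\[
(\lambda_2-\lambda_1)\bigl(g(K_2)-g(K_1)\bigr)\le0,
\]
so $g(K_2)=g(K_1)$. Then $f_m(K_2)<f_m(K_1)$, which contradicts the optimality of $K_1$ at $\lambda_1$. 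Hence every minimizer at $\lambda_2$ is at most every minimizer at $\lambda_1$.

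For $m$, write $f_m(K)=1+\theta_m D^{1/K}$ with $\theta_m=(m-1)/(m+1)$, which is strictly increasing in $m$. Take $m_1<m_2$, a minimizer $K_1$ for $m_1$, and suppose some minimizer $K_2$ for $m_2$ satisfies $K_2<K_1$. Adding the two optimality inequalities gives
\[
(\theta_{m_2}-\theta_{m_1})\bigl(D^{1/K_2}-D^{1/K_1}\bigr)\le0.
\]
This is impossible, because both factors are strictly positive. Hence every minimizer for $m_2$ is at least every minimizer for $m_1$.

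\textbf{Strict monotonicity and improvement.} For each fixed $K$, $\Psi_{m,D,\lambda}(K)$ is strictly increasing in $m$ because $\theta_m$ is. Take the minimizer $K_{m+1}^*$. Then
\[
\Psi^*_{m}\le\Psi_m(K^*_{m+1})<\Psi_{m+1}(K^*_{m+1})=\Psi^*_{m+1}.
\]
For the coarse envelope, $\alpha_m(c)<1+c$ pointwise, and the envelope's minimum is attained by the same existence argument. Evaluating $\Psi$ at the envelope's minimizer gives the strict improvement.

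\textbf{Main obstacle.} The delicate step is the monotonicity of $g(K)=\beta_K(D)$ in $K$, which both the existence and $\lambda$-monotonicity arguments use. Since $q$ ranges over $1\le q<K$, I will compare $K$ and $K+1$ directly: any $q<K$ is still admissible at $K+1$ and satisfies $qD^{-q/(K+1)}\ge qD^{-q/K}$, so $g(K+1)\ge g(K)$. The base case $g(1)=0\le g(2)$ is immediate. If this comparison failed, the fallback would be a lattice argument with ties handled via strictness of $f_m$.
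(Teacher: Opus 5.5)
Your proposal is correct and follows the paper's proof essentially step for step. Both write $\Psi_m(K)=1+\tfrac{m-1}{m+1}D^{1/K}+\lambda\beta_K(D)$, use divergence of $\beta_K$ for existence, add the two optimality inequalities for both comparative statics, and use the pointwise strict inequality in $m$ for strict monotonicity of $\Psi^*$. The only minor differences are that you break the $\lambda$-tie using the strict decrease of $D^{1/K}$ instead of the strict increase of $\beta_K$ shown in the appendix, and you spell out the coarse-envelope comparison, which the paper's proof leaves implicit.
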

\begin{proof}
Write $a_m=(m-1)/(m+1)$, so $\Psi_m(K)=1+a_mD^{1/K}+\lambda\beta_K(D)$. The first nonconstant term decreases in $K$ while $\beta_K(D)$ strictly increases and diverges, so positive price gives a finite minimizer. Adding optimality inequalities at two prices gives $(\lambda_2-\lambda_1)(\beta_{K_2}-\beta_{K_1})\le0$, hence $K_2\le K_1$. At two residual counts, adding the corresponding inequalities gives $(a_2-a_1)(D^{1/K_2}-D^{1/K_1})\le0$, hence $K_2\ge K_1$.

For strict optimized-coefficient monotonicity, let $m_1<m_2$ and let $K_2^*$ minimize the $m_2$ objective. For every finite $K$, $a_{m_2}>a_{m_1}$ and $D^{1/K}>1$, so $\Psi_{m_2}(K)>\Psi_{m_1}(K)$. Therefore
\[
\Psi^*_{m_2}=\Psi_{m_2}(K_2^*)>\Psi_{m_1}(K_2^*)\ge\Psi^*_{m_1}.
\]
At zero price the objective decreases to the displayed nonattained limit.

The positive-price optimizer is also computable by finite forward search. After evaluating successive depths, stop at the first $K\ge2$ for which
\[
1+\lambda\beta_K(D)>\min_{1\le j<K}\Psi_{m,D,\lambda}(j).
\]
Because $\beta_K$ increases and every later objective is at least $1+\lambda\beta_L(D)$, no later depth can beat the incumbent.
\end{proof}

\begin{definition}[Adaptive fallback policy $G^{\rm ad}$]
If $D=1$, use $K=1$. If $D>1$ and $\lambda>0$, then when a trigger leaves $m$ unfinished jobs, observe $m$, choose a deterministic minimizer $K_m^*(D,\lambda)$, and run the corresponding fallback. If $D>1$ and $\lambda=0$, then a residual set with $m=1$ already has exact coefficient one for every finite depth, so we may take $K=1$; for $m\ge2$ no finite bound-minimizing depth exists. We therefore do not define an exact optimized zero-price policy over the full range of possible residual set sizes. Theorem~\ref{thm:residual} remains valid for every fixed finite $K$.
\end{definition}

\paragraph{Zero-price approximation.}
The nonattainment at $\lambda=0$ does not prevent finite-depth approximation. For $D>1$ and $m\ge2$,
\[
\Psi_{m,D,0}(K)-\frac{2m}{m+1}=\frac{m-1}{m+1}\bigl(D^{1/K}-1\bigr).
\]
Hence, for any $\varepsilon>0$, every integer
\[
K\ge \max\left\{1,\left\lceil\frac{\log D}{\log\left(1+\varepsilon\frac{m+1}{m-1}\right)}\right\rceil\right\}
\]
gives $\Psi_{m,D,0}(K)\le 2m/(m+1)+\varepsilon$. Thus zero price has no finite exact optimizer for $m\ge2$, but any prescribed additive accuracy in the proved coefficient is attained at finite depth.

\section{Numerical illustration}
\begin{table*}[t]
\centering
\caption{Positive-price depth selected by the exact residual bound for $D=50$. These are deterministic evaluations of the theorem, not empirical performance measurements.}\label{tab:depth}
\begin{tabular}{rrrr}
\toprule
$m$ & $\lambda$ & $K_m^*$ & $\Psi_m^*$\\
\midrule
2&0.01&39&1.4052\\
2&0.10&14&1.5716\\
2&0.50&7&1.9099\\
2&1.00&6&2.1827\\
5&0.01&54&1.7675\\
5&0.10&18&1.9972\\
5&0.50&9&2.4489\\
5&1.00&7&2.8198\\
10&0.01&60&1.9297\\
10&0.10&21&2.1827\\
10&0.50&10&2.6738\\
10&1.00&7&3.0848\\
10&1.50&6&3.3847\\
\bottomrule
\end{tabular}
\end{table*}

Table~\ref{tab:depth} reports the bound-minimizing depths for $D=50$ across representative residual counts and interruption penalties. For $n=10$ and $D=50$, Proposition~\ref{prop:range} gives
\[
R_{10,50}=\frac{545}{104}\approx5.24038.
\]
With $\rho=2$, we have $A_{10,50,2}=2$. In the single-failure case with $m=10$ unfinished jobs, the fallback bound is smaller than the bounded-range list bound precisely when
\[
2+\Psi^*_{10,50,\lambda}<\frac{545}{104}.
\]
Equality occurs numerically at $\lambda\approx1.2379013$, with $K=7$. Thus the proved fallback bound is smaller below this value and the list bound is smaller above it.

\section{From the fallback bound to the full policy}
\begin{theorem}[Full-policy bounds]\label{thm:prop}
Assume either $D=1$ or $D>1$ with $\lambda>0$, so the adaptive policy applies, and write $A=A_{n,D,\rho}$ and $R=R_{n,D}$. Under joint coverage,
\[
J_{G^{\rm ad}}(p)\le A\operatorname{OPT}(p).
\]
If exactly the triggering job violates its interval while all other intervals cover, with realized residual count $m$,
\[
J_{G^{\rm ad}}(p)\le[A+\Psi^*_{m,D,\lambda}]\operatorname{OPT}(p).
\]
For arbitrary reports with a trigger leaving $m$ residual jobs,
\[
J_{G^{\rm ad}}(p)\le[R+\Psi^*_{m,D,\lambda}]\operatorname{OPT}(p).
\]
If no trigger occurs, $J\le R\operatorname{OPT}$.
\end{theorem}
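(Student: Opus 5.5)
The plan is to split the total cost of $G^{\rm ad}$ at the trigger time $\tau$ and bound the pre-trigger and post-trigger parts separately. The pre-trigger part is compared with the upper-endpoint list via Proposition~\ref{prop:range}. The post-trigger part is compared with $\operatorname{OPT}(r)$ via Theorem~\ref{thm:residual} at the depth chosen by Theorem~\ref{thm:adaptive}.

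\textbf{No trigger and joint coverage.} Under joint coverage no upper bound is violated: each $p_j\le u_j$, so every job completes by the time it has received $u_j$ units of service and the trigger never fires. Hence under $V_0$, and more generally whenever $\tau=\infty$, the policy is exactly the nonpreemptive upper-endpoint list $U$, with no charged interruptions. Then $J=F_U(p)$. Proposition~\ref{prop:range} gives $F_U\le R\operatorname{OPT}(p)$ for any reports, and $F_U\le A\operatorname{OPT}(p)$ under coverage. This settles the first and last claims.

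\textbf{Triggering outcomes.} Let $C$ be the set of jobs completed before $\tau$, and let $r$ be the residual vector of the $m$ unfinished jobs. Each of these $m$ jobs has completion time $\tau+C_j^{\rm res}$. Before $\tau$ there are no charged interruptions, and entering the fallback creates none, so
\[
J_{G^{\rm ad}}(p)=\sum_{j\in C}C_j+m\tau+F^{\rm res}(r)+\kappa P^{\rm res}(r).
\]
By Theorem~\ref{thm:residual}, at the depth $K_m^*$ of Theorem~\ref{thm:adaptive}, the residual part is at most $\Psi^*_{m,D,\lambda}\operatorname{OPT}(r)$; for $m=1$ or $D=1$ the coefficient is $1=\Psi^*$. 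Since $r_j\le p_j$ and $\operatorname{OPT}$ is monotone in its argument (via the SPT sum over sorted sizes), $\operatorname{OPT}(r)\le\operatorname{OPT}(p)$.

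It remains to bound the pre-trigger part $\sum_{j\in C}C_j+m\tau$ by $F_U(p)$. Up to time $\tau$, $G^{\rm ad}$ follows $U$ exactly. So the jobs in $C$ have the same completion times as in $U$. The triggering job $e$ has not finished, so its completion time in $U$ exceeds $\tau$. Every later job in the $U$-list starts no earlier than $e$ finishes in $U$, so its completion time in $U$ is also at least $\tau$. Hence
\[
\sum_{j\in C}C_j+m\tau\le F_U(p)\le R\operatorname{OPT}(p),
\]
which gives the arbitrary-report bound.

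\textbf{Single failure.} Under $V_1$ we need $F_U(p)\le A\operatorname{OPT}(p)$, but the $\rho$ argument of Proposition~\ref{prop:range} fails for pairs involving $e$, since $p_e>u_e$. This is the step I expect to be the main obstacle. I would avoid it by bounding the pre-trigger quantity directly rather than $F_U(p)$.

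Define a truncated instance $p'$ with $p'_e=u_e$ and $p'_j=p_j$ for $j\ne e$. This instance is jointly covered, and it satisfies $p'\le p$. In $U$ run on $p'$, job $e$ completes exactly at $\tau$, and every later job completes after $\tau$. So
\[
\sum_{j\in C}C_j+m\tau\le F_U(p')\le\rho\operatorname{OPT}(p')\le\rho\operatorname{OPT}(p).
\]
The range bound applies to $p'$ as well, provided $u_e\ge s_0$. This holds because $u_e\ge\ell_e$, and I would argue $\ell_e\ge s_0$ or else clamp $p'_e=\max\{u_e,s_0\}$; I would check this clamping carefully. Taking the minimum gives the factor $A$, and adding the residual bound yields $A+\Psi^*$.
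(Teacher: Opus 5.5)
Your proposal is correct and follows the paper's proof essentially step for step. You split $J$ at $\tau$ and bound the pre-trigger contribution $\sum_{j\in C}C_j+m\tau$ by $F_U(p)\le R\operatorname{OPT}(p)$. In the single-failure case you also bound it by $F_U(p')\le\rho\operatorname{OPT}(p')\le\rho\operatorname{OPT}(p)$ using the capped vector $p'$, and then you add $\Psi^*_{m,D,\lambda}\operatorname{OPT}(r)\le\Psi^*_{m,D,\lambda}\operatorname{OPT}(p)$.

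Drop your attempt to apply the range bound to $p'$. You cannot infer $\ell_e\ge s_0$, because reports are arbitrary and $s_0$ is unrelated to $\ell_e$. Clamping $p'_e=\max\{u_e,s_0\}$ would give $p'_e>u_e$ when $u_e<s_0$, which destroys the coverage the $\rho$ bound needs. None of this is necessary: your triggering-outcomes paragraph already gives $\sum_{j\in C}C_j+m\tau\le F_U(p)\le R\operatorname{OPT}(p)$ on the actual vector. Taking the minimum of that bound and the $\rho$ bound from $p'$ is exactly how the paper obtains $A$.
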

\begin{proof}
Coverage gives the $A$ bound by Proposition~\ref{prop:range}. In the single-failure case, cap the triggering processing time at its upper endpoint, producing a covered vector $p'\le p$. Let $B$ be the pre-trigger absolute completion contribution: completion times of already completed jobs plus $m\tau$ for the residual jobs. The actual pre-trigger trajectory is a prefix of the covered upper-endpoint list on $p'$. Continuing that list after the trigger can only add nonnegative completion-time contribution, hence
\[
B\le F_U(p')\le\rho\operatorname{OPT}(p')\le\rho\operatorname{OPT}(p),
\]
where the last inequality uses $p'\le p$ and monotonicity of SPT optimum.

Independently, $B$ is no larger than the total completion-time cost obtained by continuing the same upper-endpoint list on the actual processing vector $p$. Proposition~\ref{prop:range} therefore gives
\[
B\le F_U(p)\le R\operatorname{OPT}(p).
\]
Combining the two estimates yields $B\le A\operatorname{OPT}(p)$. The zero-padded residual vector satisfies $\operatorname{OPT}(r)\le\operatorname{OPT}(p)$, and there is no charged interruption before the trigger or when entering the fallback. Adding the selected residual bound proves the single-failure result.

For arbitrary reports, only the bounded-range list argument is available, giving $B\le F_U(p)\le R\operatorname{OPT}(p)$ and therefore $R+\Psi_m^*$. With no trigger, the upper-endpoint list itself has coefficient $R$.
\end{proof}

\begin{theorem}[Expected bound for random instances]
Assume the same adaptive domain and $0<\mathbb E[\operatorname{OPT}(P)]<\infty$. Let $M$ be residual count on triggering outcomes and $M=0$ otherwise, with $\Psi_0^*=0$. Define the OPT-weighted probabilities
\[
\begin{aligned}
q_{1,m}&=\frac{\mathbb E[\operatorname{OPT}\mathbf1_{V_1\cap\{M=m\}}]}{\mathbb E[\operatorname{OPT}]},\\
q_{c,m}&=\frac{\mathbb E[\operatorname{OPT}\mathbf1_{V_c\cap\{M=m\}}]}{\mathbb E[\operatorname{OPT}]}.
\end{aligned}
\]
Then, with $q_1=\sum_{m=1}^nq_{1,m}$ and $q_c=\sum_{m=0}^nq_{c,m}$,
\[
\begin{aligned}
\frac{\mathbb E[J_{G^{\rm ad}}]}{\mathbb E[\operatorname{OPT}]}
&\le (1-q_1-q_c)A\\
&\quad+\sum_{m=1}^nq_{1,m}(A+\Psi_m^*)\\
&\quad+\sum_{m=0}^nq_{c,m}(R+\Psi_m^*).
\end{aligned}
\]
For fixed finite $n$ and $s_0\le P_j\le s_1$, the integrability assumption is automatic.
\end{theorem}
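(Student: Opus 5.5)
The plan is to integrate the pointwise bounds of Theorem~\ref{thm:prop} against the probability measure of the random instance $P$. First I will fix the partition of the sample space. The events $V_0$, $V_1$, $V_c$ partition all outcomes, and within $V_1$ and $V_c$ I refine further by the value of $M\in\{0,1,\ldots,n\}$. On $V_1$ a trigger necessarily occurs, since $V_1$ is defined through the first upper-bound trigger, so $M\ge1$ there. On $V_0$ no interval is violated, so no trigger occurs and $M=0$. Using Appendix~\ref{app:meas}, I will check that each of these events, together with $J_{G^{\rm ad}}(P)$ and $\operatorname{OPT}(P)$, is measurable. This needs the trigger time and the residual count to be measurable functions of $P$, and it also needs the deterministic selection of $K_M^*$ to depend only on $M$.

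Next I will write a single pointwise inequality. Define $\Phi(P)=A$ on $V_0$, $\Phi(P)=A+\Psi^*_M$ on $V_1$, and $\Phi(P)=R+\Psi^*_M$ on $V_c$, with the convention $\Psi^*_0=0$. This convention covers no-trigger outcomes in $V_c$, where Theorem~\ref{thm:prop} gives coefficient $R$. Theorem~\ref{thm:prop} then yields $J_{G^{\rm ad}}(P)\le\Phi(P)\operatorname{OPT}(P)$ for every outcome:
\begin{itemize}
\item the joint-coverage clause handles $V_0$;
\item the single-failure clause handles $V_1$ with $m=M$;
\item the arbitrary-report clause and the no-trigger clause handle $V_c$.
\end{itemize}
Because $\Phi$ is a finite linear combination of indicators of the partition cells with constant coefficients, I can take expectations:
\[
\mathbb E[J]\le A\,\mathbb E[\operatorname{OPT}\mathbf 1_{V_0}]+\sum_m(A+\Psi_m^*)\mathbb E[\operatorname{OPT}\mathbf 1_{V_1\cap\{M=m\}}]+\sum_m(R+\Psi_m^*)\mathbb E[\operatorname{OPT}\mathbf 1_{V_c\cap\{M=m\}}].
\]
Dividing by $\mathbb E[\operatorname{OPT}]\in(0,\infty)$ gives the claimed bound. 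The $V_0$ weight equals $1-q_1-q_c$ because the cells partition the space, so their OPT-weighted probabilities sum to one. All terms are nonnegative and $\Phi$ is bounded by the constant $R+\max_m\Psi_m^*$, so $\mathbb E[J]$ is finite and the manipulation is legitimate.

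For the final sentence I will show that if $s_0\le P_j\le s_1$, then $ns_0\le\operatorname{OPT}(P)\le S_ns_1$ pointwise. This makes $\mathbb E[\operatorname{OPT}]$ automatically positive and finite.

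The main obstacle is not analytic but a matter of bookkeeping and measurability. I have to verify that the cases of Theorem~\ref{thm:prop} really do exhaust every outcome, including $V_c$ outcomes with no trigger, and that $M$ and the policy's cost are measurable. For the latter I will rely on Appendix~\ref{app:meas}, which I expect establishes that the trigger time is a measurable function of $P$ and that there are finitely many service phases once $K$ is fixed.
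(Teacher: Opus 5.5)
Your proposal is correct and is essentially the argument the paper intends. The paper states this theorem without a separate proof; it follows exactly as you describe, by applying the pointwise bounds of Theorem~\ref{thm:prop} on the partition $V_0$, $V_1\cap\{M=m\}$, $V_c\cap\{M=m\}$, taking OPT-weighted expectations, and relying on Appendix~\ref{app:meas} for measurability. Your bookkeeping supplies precisely the details the paper leaves implicit: disjointness of $V_0$ and $V_1$, $M\ge1$ on $V_1$, $M=0$ on $V_0$, the no-trigger part of $V_c$ absorbed by $\Psi_0^*=0$, and $ns_0\le\operatorname{OPT}(P)\le S_ns_1$ for the final sentence.
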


\section{Comparison with continuing the initial list}
\begin{proposition}[When the fallback bound improves on the list bound]
Continuing the upper-endpoint list has no charged interruptions and is bounded by $R_{n,D}$. In the single-failure case with residual count $m$, Theorem~\ref{thm:prop} gives the fallback coefficient $A_{n,D,\rho}+\Psi^*_{m,D,\lambda}$. Hence the fallback bound is smaller than the bounded-range list bound exactly when
\[
A_{n,D,\rho}+\Psi^*_{m,D,\lambda}<R_{n,D}.
\]
If $\rho\ge R_{n,D}$ the inequality cannot hold under these bounds; if $\rho<R_{n,D}$ it is equivalent to $\Psi_m^*<R_{n,D}-\rho$. Under arbitrary reports the available fallback coefficient is $R_{n,D}+\Psi_m^*$, so it cannot improve on $R_{n,D}$ within this analysis. This proposition compares only the bounds proved here; it does not claim that $R_{n,D}$ is the best continuation bound available after observing a particular failure state.
\end{proposition}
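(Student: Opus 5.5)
The statement is essentially a comparison of the constants already proved, so I plan to assemble it from Proposition~\ref{prop:range} and Theorem~\ref{thm:prop} and then do some elementary algebra on $A_{n,D,\rho}=\min\{\rho,R_{n,D}\}$.

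\emph{The two coefficients.} Continuing the upper-endpoint list after the trigger is just the nonpreemptive list $U$ run to completion. It never switches away from an unfinished job, so $P_U=0$ and $J_U=F_U$. Proposition~\ref{prop:range} applies to any nonpreemptive list and gives $J_U\le R_{n,D}\operatorname{OPT}(p)$ with no assumption on the reports. On the fallback side, the single-failure clause of Theorem~\ref{thm:prop} gives the coefficient $A_{n,D,\rho}+\Psi^*_{m,D,\lambda}$, and the arbitrary-report clause gives $R_{n,D}+\Psi^*_{m,D,\lambda}$. By definition, ``the fallback bound is smaller'' means strict inequality between these numerical coefficients. The displayed criterion is therefore a tautological restatement, and nothing further needs to be proved for it.

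\emph{Case split on $\rho$.} First I would record that $\Psi^*_{m,D,\lambda}\ge0$, and indeed $\Psi^*\ge1$ for $m\ge1$. This holds because $\alpha_m(c)\ge1$ for $c\ge1$: we have $\alpha_m(c)=1+c(m-1)/(m+1)$, and $\beta_K\ge0$. Now split on $\rho$.
\begin{itemize}
\item If $\rho\ge R_{n,D}$, then $A=R_{n,D}$. The criterion becomes $R+\Psi^*<R$, which is impossible.
\item If $\rho<R_{n,D}$, then $A=\rho$. The criterion reads $\rho+\Psi^*<R$, which is equivalent to $\Psi^*<R-\rho$.
\end{itemize}
Under arbitrary reports the fallback coefficient is $R+\Psi^*\ge R$, so it can never be strictly smaller than $R$.

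\emph{Main obstacle.} The only point needing care is to make clear that the comparison is between the proved coefficients and not between actual costs. I would state explicitly that no optimality of $R_{n,D}$ as a conditional continuation bound is claimed. I would also note that $\Psi^*$ is well defined on the adaptive domain ($D>1,\lambda>0$, or $D=1$), where Theorem~\ref{thm:adaptive} guarantees a finite minimizer.
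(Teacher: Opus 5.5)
Your proposal is correct and matches the paper's argument. The paper gives no separate proof: the reasoning is contained in the statement itself, namely comparing the proved coefficients and splitting on $A_{n,D,\rho}=\min\{\rho,R_{n,D}\}$. Your explicit check that $\Psi^*_{m,D,\lambda}\ge 1$ (from $\alpha_m(c)=1+c(m-1)/(m+1)\ge 1$ and $\beta_K\ge 0$) usefully spells out the nonnegativity that the paper leaves implicit in the $\rho\ge R_{n,D}$ case and the arbitrary-report case.
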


\section{Related work}
Classical nonclairvoyant scheduling already treats preemption and its cost as important modelling issues. \citet{motwani1994} develop a general framework for preemptive nonclairvoyant scheduling whose stated model explicitly takes preemption cost into account. Our setting is more specific: we use the additive objective $J=F+\kappa P$, with one charge for each switch away from an unfinished job, and change scheduling rule only after violation of a reported upper bound. \citet{bartal2006} study a constant cost per preemption together with flow/completion performance; unlike our all-release interval setting, their jobs have known sizes upon release and arrive online.

Restart models include \citet{vanstee2005}, \citet{amouzandeh2026}, and \citet{jager2025}. J\"ager et al. are particularly close structurally: their $b$-scaling strategies probe unfinished jobs at geometric service scales. Their main construction uses kill-and-restart/nonclairvoyant semantics, whereas our fallback begins only after a reported upper bound is violated, preserves completed service, and charges interruptions explicitly. \citet{liu2004} instead model preemption penalties as setup time that consumes machine capacity.

An especially close line is scheduling with testing, or explorable uncertainty. \citet{durr2020} study an all-release single-machine total-completion-time problem in which each job has a known upper limit and the scheduler may spend machine time on a test that reveals the exact processing time. \citet{albers2021} extend the framework to non-uniform testing times and a preemptive variant, while \citet{gong2025} continue the testing line on identical parallel machines in the \emph{Journal of Scheduling}. Our information structure has no separate test and never reveals the exact processing time. New information arrives only when ordinary execution exceeds a reported upper endpoint; completed service is retained, and later interruptions are charged through $\kappa P$. The decision here is therefore how to reschedule after a report has failed, rather than whether to spend time acquiring exact information.

Interval processing times are well established in robust scheduling. \citet{sotskov2018} study single-machine total completion time with interval processing times using stability and optimality-box ideas; \citet{kasperski2008}, \citet{pereira2016}, and \citet{bold2022} study related minmax-regret or recoverable-robust models. These approaches optimize over uncertainty sets or scenarios rather than reacting to an upper-bound violation during execution. \citet{tao2010} instead exploit a bounded ratio of longest to shortest processing time, closely related to our parameter $D=s_1/s_0$, but without interval reports, a violation trigger, or interruption pricing.

Learning-augmented and partially revealed scheduling use other information channels. \citet{purohit2018}, \citet{lindermayr2022}, \citet{benomar2024}, \citet{benomar2025}, and \citet{lindermayr2026} study point, permutation, partial, or progressively revealed predictions. \citet{zhao2026} give smoothness bounds for non-clairvoyant scheduling with job-size predictions.

A particularly close recent preprint is Blue, Im, and Lindermayr, \emph{Learning-Augmented Online Scheduling with Parsimonious Preemption} (arXiv:2605.23255). They minimize total completion time using job-size predictions while simultaneously controlling the number of preemptions, including constant-competitive single-machine guarantees whose preemption count depends on prediction error. Their preemption count is a separate complexity/overhead measure rather than an additive term in the optimized objective. Our model instead uses interval reports, begins with a nonpreemptive upper-endpoint list, switches to the geometric fallback only after an observable upper-bound violation, and prices interruptions directly in $J=F+\kappa P$. Thus both papers address prediction quality and preemption, but the objective and information pattern are different.

Two other recent preprints are adjacent. Chen, Ye, and Zhou, \emph{Adaptively Robust LLM Inference Optimization under Prediction Uncertainty} (arXiv:2508.14544), use output-length prediction intervals with adaptive refinement for LLM inference. Gupta, Kaplan, Lindermayr, Schl\"oter, and Yingchareonthawornchai, \emph{Better Late Than Never: Online Flow Time Scheduling with Online Estimates} (arXiv:2609.07402), study online flow-time scheduling when approximate processing-time estimates may become available only after processing has begun; they have online arrivals and delayed approximate-size information rather than interval-bound violations and a charged fallback. \citet{buld2025} study testing for total weighted completion time, while \citet{gupta2026} study robust flow-time scheduling under multiplicative estimate error.

Taken together, the closest lines already contain costly preemption, geometric probing, interval uncertainty, testing, and learning-augmented scheduling in different forms. The contribution here is the specific all-release model in which ordinary execution itself exposes a failed upper report, completed service is retained, interruptions are priced additively, the fallback depth is optimized from the observed residual count, and the policy is compared against the exact bounded-range list coefficient $R_{n,D}$.

\section{Limitations and conclusion}
The model is deliberately limited to a finite all-release batch on one unit-speed resumable machine, with a known finite processing-time range and an additive cost for switching away from unfinished work. It excludes release times, weights, due dates, setup durations, multiple machines, routing, dynamic report updates, and nonstationary calibration. Appendix~\ref{app:meas} records a small minimax property of the geometric thresholds within the stated common-threshold family; no unrestricted threshold-minimax result is claimed.

Within this setting, a violated upper bound provides an observable point at which the scheduler can change its treatment of the unfinished jobs. The coefficient $R_{n,D}$ is the sharp worst-case bounded-range coefficient for an arbitrary nonpreemptive list and supplies the comparison used in Proposition~9.1. Every fixed finite $K$ is valid at any nonnegative interruption penalty; when the penalty is positive, the observed number of unfinished jobs determines an attained minimizing depth. At $D=1$, $K=1$ is exact. At $D>1$ and zero price with $m\ge2$, the infimum over depths is not attained, although Section~6 gives an explicit finite depth for any prescribed additive accuracy in the bound. We do not claim global optimality over all resumable policies, a matching lower bound for all policies, or optimality of $R_{n,D}$ among continuation rules that use the full state observed at a failure.

\appendix
\section{Fractional comparison and static permutations}\label{app:fractional}
\begin{proposition}[Fractional comparison]
Let $0<\theta<1$. Policy $H_\theta$ gives rate $1-\theta$ to the first unfinished job in upper-endpoint order and shares rate $\theta$ equally over all unfinished jobs. On joint coverage,
\[
F_{H_\theta}(p)\le\frac{\rho}{1-\theta}\operatorname{OPT}(p).
\]
For every positive processing vector,
\[
F_{H_\theta}(p)\le\frac{2n}{(n+1)\theta}\operatorname{OPT}(p),
\]
and the underlying equal-sharing factor $2n/(n+1)$ is tight at equal job sizes. The corresponding OPT-weighted expected bound follows by conditioning on joint coverage.
\end{proposition}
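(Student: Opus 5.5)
We will prove the three claims of the fractional comparison separately, each time working with the pairwise decomposition of total completion time used in the proof of Theorem~\ref{thm:residual}. Label jobs $1,\dots,n$ in upper-endpoint order. Write $F=\sum_j p_j+\sum_{i<j}(A_{ij}+B_{ij})$, where $A_{ij}$ is the service given to $i$ while $j$ is unfinished, and $B_{ij}$ is the service given to $j$ while $i$ is unfinished. This identity holds for any fractional single-server schedule, because total rate is one: the completion time of $j$ equals $p_j$ plus all service to other jobs before $C_j$. We then bound each pair by a multiple of $\min\{p_i,p_j\}$ and compare with $\operatorname{OPT}(p)=\sum_j p_j+\sum_{i<j}\min\{p_i,p_j\}$ from the clairvoyant benchmark lemma.

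\textbf{Coverage bound.} Fix a pair $i<j$ and look at the window before $\min\{C_i,C_j\}$. While both jobs are unfinished, job $j$ is never the first unfinished job, because $i$ precedes it. So $j$ receives only its equal share $\theta/k\le\theta/2$, while $i$ receives at least its share $\theta/k$. This gives $B_{ij}\le A_{ij}$ during the common window, with $B_{ij}=0$ after $j$ finishes. Hence $A_{ij}+B_{ij}\le 2\min\{p_i,p_j\}$ up to the rate split, which is not sharp enough. The sharper route is a time-rescaling argument. Until $C_i$ for every $i$ that is first in order, the head job gets rate at least $1-\theta$. Comparing with the nonpreemptive list $U$ run at speed $1-\theta$ gives $C_j^{H_\theta}\le C_j^U/(1-\theta)$ for each $j$. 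The reason is that the work of jobs $1,\dots,j$ is served at total rate at least $1-\theta$ whenever any of them is unfinished, and job $j$ is unfinished until that prefix work is done. Summing and applying Proposition~\ref{prop:range} ($F_U\le\rho\operatorname{OPT}$ on $V_0$) yields $\rho/(1-\theta)$. The step I expect to need care is the claim $C_j^{H_\theta}\le\frac{1}{1-\theta}\sum_{a\le j}p_a$. The planned justification is this: the unfinished job that is first in order always lies in the prefix $\{1,\dots,j\}$ while $j$ is alive. So the prefix receives rate at least $1-\theta$, and it cannot finish before $j$ does.

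\textbf{Report-independent bound.} The same prefix argument, applied to the equal-sharing component, says every unfinished job receives rate at least $\theta/k$, where $k$ is the number of unfinished jobs. Pure equal sharing (processor sharing) at speed one has $F_{PS}=\sum_j p_j+2\sum_{i<j}\min\{p_i,p_j\}$. Here we note that the small-$\theta$ case is dominated by the worst case, and we compare with processor sharing run at speed $\theta$. The intended inequality is that, sorting by size, the job of $\ell$-th smallest size completes by $\frac1\theta$ times its processor-sharing completion time. To prove it we will use the identity $C_{(\ell)}^{PS}=\sum_{a<\ell}p_{(a)}+(n-\ell+1)p_{(\ell)}$. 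We will then argue that under $H_\theta$ at least $\theta$ total rate is spread so that each alive job gains at least $\theta/k$. The outline is that the time for the $\ell$ smallest jobs to finish their equal-share work is at most $C^{PS}_{(\ell)}/\theta$; I expect this monotone-coupling step to be the main obstacle. Then $F_{PS}\le\frac{2n}{n+1}\operatorname{OPT}$ follows from $Q\le\frac{n-1}{2}W$, as in Theorem~\ref{thm:residual}: $W+2Q\le(1+\tfrac{2(n-1)}{n+1}\cdot\tfrac12\cdot\ldots)$, with the maximum of $(W+2Q)/(W+Q)$ attained at $Q=\frac{n-1}{2}W$, giving $\frac{2n}{n+1}$. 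Tightness at equal sizes is immediate, since there $Q=\frac{n-1}{2}W$ exactly.

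\textbf{Expected bound.} On $V_0$ use the $\min$ of the two pathwise bounds, and off $V_0$ use the second one. Multiply each by $\operatorname{OPT}$, take expectations, and divide by $\mathbb E[\operatorname{OPT}]$. This gives the OPT-weighted mixture, mirroring the expected-bound theorem.
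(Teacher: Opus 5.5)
Your route is the paper's: dominate $H_\theta$ by the upper-endpoint list run at speed $1-\theta$ and by equal sharing run at speed $\theta$, then use $F_U\le\rho\operatorname{OPT}$ on $V_0$ and $F_E\le\frac{2n}{n+1}\operatorname{OPT}$. The coverage half is complete as written. While $j$ is unfinished, the head job lies in $\{1,\dots,j\}$, so that prefix is served at rate at least $1-\theta$, giving $(1-\theta)C_j^{H_\theta}\le\sum_{a\le j}p_a$.

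Your evaluation $F_E=W+2Q$ with $Q\le\frac{n-1}{2}W$ is a valid alternative to the paper's increment computation, which compares the coefficient $m^2$ of $d_k$ against $m(m+1)/2$. Replace the garbled display by the single line $(W+2Q)/(W+Q)=1+Q/(W+Q)$, which is increasing in $Q$.

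The gap is the step you flag yourself: the report-independent half needs $C^{H_\theta}_{(\ell)}\le C^{PS}_{(\ell)}/\theta$ (or at least $F_{H_\theta}\le F_E/\theta$), and you only restate it. Knowing that every unfinished job receives at least $\theta/k$ is not enough by itself. Here $k=k_H(t)$ counts jobs unfinished \emph{under $H_\theta$}. Priority service can finish a long head job while short jobs are still pending, so the two schedules have different unfinished sets. You therefore need a reason why $k_H(t)$ never exceeds the equal-sharing count $k_E(t)$.

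The missing idea is an invariant on attained service. Let $a(t)$ be the common attained service of unfinished jobs under equal sharing at total rate $\theta$. Then $a'(t)=\theta/k_E(t)$, where $k_E(t)=\#\{j:p_j>a(t)\}$. Claim: every job unfinished under $H_\theta$ at time $t$ has received at least $a(t)$.

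Take the infimum $t^*$ of failure times. By continuity, every job unfinished under $H_\theta$ at $t^*$ has received at least $a(t^*)$, and hence has $p_j>a(t^*)$. All rates are piecewise constant, so on a short right-neighbourhood of $t^*$ we get $k_H\le\#\{j:p_j>a(t^*)\}=k_E$. Each such job is then served at rate at least $\theta/k_H\ge\theta/k_E=a'$, so it cannot fall below $a$, which is a contradiction.

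Consequently, if $j$ were still unfinished under $H_\theta$ at the time when $a(t)=p_j$, it would already have received $p_j$. Hence $C_j^{H_\theta}\le C_j^{PS}/\theta$ job by job, which is exactly your intended inequality, and summing gives the bound. The paper compresses this into ``induction over completion epochs.'' With this invariant supplied, your proposal is complete.
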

\begin{proof}
For a fixed list $\pi$,
\[
F_\pi(p)=\sum_{k=1}^n(n-k+1)p_{\pi(k)}.
\]
On coverage, $p\le u\le\rho p$, so upper-endpoint order costs at most $\rho\operatorname{OPT}(p)$. For equal sharing let $p_{(1)}\le\cdots\le p_{(n)}$, $p_{(0)}=0$, and $d_k=p_{(k)}-p_{(k-1)}$. The exact equal-sharing cost is
\[
F_E(p)=\sum_{i=1}^n[2(n-i)+1]p_{(i)}.
\]
For suffix size $m=n-k+1$, the coefficient of $d_k$ is $m^2$, while its coefficient in OPT is $m(m+1)/2$. Hence
\[
F_E(p)\le\frac{2n}{n+1}\operatorname{OPT}(p),
\]
with equality for equal positive job sizes.

Whenever the next unfinished job in upper-endpoint order is present, $H_\theta$ gives it rate at least $1-\theta$ from the priority component, in addition to any sharing service. Inducting over completions gives $F_{H_\theta}\le F_U/(1-\theta)$. When $m$ jobs are unfinished, the sharing component gives every unfinished job rate $\theta/m$, exactly the per-job rate of equal sharing at total speed $\theta$; the extra priority service can only complete jobs earlier. Induction over completion epochs therefore gives $F_{H_\theta}\le F_E/\theta$. Combining the two bounds proves the proposition, and OPT-weighted conditioning gives the expected version.
\end{proof}

\begin{proposition}[Static-permutation obstruction]
For identical reports, a randomized static policy has some job with expected list coefficient at least $(n+1)/2$. Making that job's processing time tend to infinity while the others stay at one gives asymptotic expected competitive ratio at least $(n+1)/2$.
\end{proposition}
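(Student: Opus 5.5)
Because every job carries the same report, the static policy cannot tell the jobs apart, so I will argue through the expected list position of each job under a randomized permutation. A randomized static policy is a probability distribution $\mu$ over permutations $\pi$ of $\{1,\dots,n\}$, fixed before any processing. For job $j$, let $N_j(\pi)$ be the number of jobs at or after $j$'s position in $\pi$. By the list identity from the appendix proof, $F_\pi(p)=\sum_j N_j(\pi)p_j$, so $N_j(\pi)$ is the list coefficient of job $j$. I call $\mathbb E_\mu[N_j(\pi)]$ the expected list coefficient of job $j$.

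The first step is an averaging argument. For every permutation, $\sum_j N_j(\pi)=\sum_{k=1}^n(n-k+1)=n(n+1)/2$. Taking expectations and averaging over $j$, some job $j^\ast$ has $\mathbb E_\mu[N_{j^\ast}]\ge (n+1)/2$. The identical reports only matter here because they force $\mu$ to be chosen without any information distinguishing the jobs. The choice of $j^\ast$ depends only on $\mu$, so the adversary may pick it after seeing the policy.

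The second step builds the instance: set $p_{j^\ast}=L$ and $p_i=1$ for $i\ne j^\ast$. For each fixed $\pi$, $F_\pi(p)=N_{j^\ast}(\pi)L+\sum_{i\ne j^\ast}N_i(\pi)$, and the second term is at most $n(n+1)/2$. Hence $\mathbb E_\mu[F_\pi(p)]\ge \tfrac{n+1}{2}L$. By the benchmark lemma, SPT places the long job last, so $\operatorname{OPT}(p)=L+\sum_{i=1}^{n-1}i=L+n(n-1)/2$. The ratio is therefore at least $\frac{(n+1)L/2}{L+n(n-1)/2}$, which tends to $(n+1)/2$ as $L\to\infty$.

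The main obstacle is interpretive rather than computational: I must fix what "expected competitive ratio" means for a randomized policy. I will use $\mathbb E_\mu[F]/\operatorname{OPT}$ for the fixed adversarial instance. This is legitimate because $\operatorname{OPT}(p)$ does not depend on the policy's randomness, so the ratio of expectations and the expectation of ratios coincide. I also note that $\mu$ is fixed before $p$, so the adversary can choose $j^\ast$ and $L$ in response to $\mu$, which is exactly the oblivious-adversary setting a static policy faces.
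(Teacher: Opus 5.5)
Your proof is correct and follows the paper's argument: average the list coefficients $n,n-1,\ldots,1$ over labels and over the randomization to find a job with expected coefficient at least $(n+1)/2$, then let that job's size grow while the others stay at one. One small slip: with the long job last, $\operatorname{OPT}(p)=L+(n-1)+n(n-1)/2=L+n(n+1)/2-1$, not $L+n(n-1)/2$, because the long job's completion time includes the $n-1$ unit jobs before it; this changes only an $O(1)$ term, so the limit $(n+1)/2$ is unaffected.
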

\begin{proof}
For any realized permutation, the list coefficients assigned to the $n$ job labels are exactly $n,n-1,\ldots,1$, so their average is $(n+1)/2$. Averaging also over the policy's randomization, at least one fixed job $j^*$ therefore has expected coefficient at least $(n+1)/2$. Set every other processing time to one and let $p_{j^*}=M\to\infty$. The policy's expected total-completion-time cost is $M\,\mathbb E[c_{j^*}]+O(1)$, where $c_{j^*}$ is the list coefficient of $j^*$, while SPT puts the long job last and has $\operatorname{OPT}=M+O(1)$. Dividing and sending $M\to\infty$ gives an asymptotic expected competitive ratio at least $(n+1)/2$.
\end{proof}

\section{Measurability and threshold details}\label{app:meas}
\paragraph{Stopping-time and Borel details.}
With upper-endpoint order $\pi$, define
\[
E_q=\bigcap_{a<q}\{p_{\pi(a)}\le u_{\pi(a)}\}\cap\{p_{\pi(q)}>u_{\pi(q)}\}.
\]
These finite-coordinate events are Borel and disjoint. On $E_q$, $\tau_q=\sum_{a<q}p_{\pi(a)}+u_{\pi(q)}$; set $\tau=\infty$ on the no-trigger event. Hence $\{\tau\le t\}$ is a finite union of observable events. Fixed-$K$ sorting, threshold comparisons, finite service phases, and arithmetic are Borel constructions, so completion times, residual vector, $P$, $J$, and $M$ are measurable. In the adaptive domain, $K_m^*$ is a deterministic function of the observed finite-valued $m$.

\paragraph{Monotonicity and divergence of $\beta_K$.}
For $D>1$, $\beta_1=0$. If $q$ maximizes $qD^{-q/K}$, the same $q$ is admissible at $K+1$ and yields a strictly larger value, so $\beta_{K+1}>\beta_K$. Choosing $q=\lfloor K/2\rfloor$ gives $\beta_K(D)\ge\lfloor K/2\rfloor D^{-1/2}\to\infty$.

\paragraph{Remark B.3 (a minimax property of the geometric thresholds).}
Consider a common sequence of cumulative thresholds $0<T_0<\cdots<T_{K-1}$ with $T_{K-1}\ge s_1$, including the initial factor $T_0/s_0$. Multiplying the $K$ adjacent expansion factors from $s_0$ through $T_{K-1}$ shows that their largest value is at least $D^{1/K}$. The geometric ladder attains equality. This statement is only about the stated common-threshold family.

\section*{Statements and Declarations}
\paragraph{Funding.} No external funding was received for this work.

\paragraph{Competing interests.} The author declares no financial or non-financial competing interests.

\paragraph{Data availability.} No empirical dataset was used. All numerical values reported in the manuscript are deterministic calculations from formulas stated in the paper.

\paragraph{Author contribution.} Lachlan Bridges: conceptualization, methodology, formal analysis, investigation, writing---original draft, writing---review and editing.

\paragraph{Use of generative AI.} OpenAI ChatGPT was used for mathematical exploration, proof checking, literature searches, and language editing. The author checked all mathematical claims, calculations, citations, and final text and takes responsibility for the manuscript.

\end{document}